\newtheorem{theorem}{Theorem}
\title{\LARGE \bf
Trajectory Tracking Using Motion Primitives for the Purcell’'s Swimmer
}
\author{Sudin Kadam$^{1}$, Kedar Joshi$^{2}$, Naman Gupta$^{2}$, Pulkit Katdare$^{2}$ and Ravi Banavar$^{3}$
\thanks{$^{1}$ Sudin Kadam is with Systems and Control Engineering, Indian Institute of Technology Bombay, Powai, Mumbai, India, 400076
        {\tt\small sudin@sc.iitb.ac.in}}%
\thanks{$^{2}$ Kedar Joshi, Naman Gupta and Pulkit Katdare are with Department of Mechanical Engineering, Indian Institute of Technology Bombay, Powai, Mumbai, India, 400076
        {\tt\small kedarjoshi@iitb.ac.in, namangupta@iitb.ac.in, pulkitkatdare@iitb.ac.in}}%
\thanks{$^{3}$ Ravi Banavar is a professor at Systems and Control Engg, IIT Bombay (currently, Visiting Professor, Department of Electrical Engg.,  IIT-Gandhinagar.)
        {\tt\small banavar@iitb.ac.in}}%
}
\begin{document}

\maketitle
\thispagestyle{empty}
\pagestyle{empty}

\begin{abstract}
Locomotion at low Reynolds numbers is a topic of growing interest, spurred by its various engineering and medical applications. This paper presents a novel prototype and a locomotion algorithm for the 3-link planar Purcell's  swimmer based on Lie algebraic notions. The kinematic model based on Cox theory of the prototype swimmer is a driftless control-affine system. Using the existing strong controllability and related results, the existence of motion primitives is initially shown. The Lie algebra of the control vector fields is then used to synthesize control profiles to generate motions along the basis of the Lie algebra associated with the structure group of the system. An open loop control system with vision-based  positioning is successfully implemented which allows tracking any given continuous trajectory of the position and orientation of the swimmer's base link. Alongside, the paper also provides a theoretical interpretation of the symmetry arguments presented in the existing literature to generate the control profiles of the swimmer.
\end{abstract}

\section{INTRODUCTION}
Locomotion relates to a variety of movements resulting in transportation from one place to another, and is crucial to existential requirements of microbial and animal life. The primary objectives in the studies of locomotion problems are to understand the mechanics of locomotion of existing biological systems and to devise mobile robotic systems to mimick similar biological systems. The type of interaction with the environment to achieve the motion forms the key to analysis of such systems. Our work focuses on the $3$-link planar Purcell's swimmer, a mechanism proposed by E. M. Purcell in his famous lecture on life at low Reynolds numbers \cite{5}.

As opposed to inertia-dominant systems, microbial motion occurs in a fluid medium with very low Reynolds number conditions, which is the ratio of the inertial to viscous forces acting on the swimmer's body.  The Reynolds number in microbial motion regimes is of the order of $10^{-3}$. To get a relative sense of the numbers, the Reynolds number for a man swimming in water is of the order of $10^4$, whereas that for a Goldfish swimming in water is of the order of $10^2$ \cite{46}, \cite{63}. Clearly, at low Reynolds numbers, viscous forces strongly dominate the motion, which is contrary to that of locomotion of larger animals which have prominent inertial effects. Swimming in this regime has seen a lot of interest, not only because of curiosity in mechanics of motion used by microbes, but also due to many engineering applications such as targeted drug delivery, micromachining, non-invasive surgery \cite{6}, \cite{7}, \cite{8}, \cite{33}, \cite{48} that rely on the knowledge in this domain.

The slender body theory at low Reynolds number helps in obtaining the equations of motions which fit well in the geometric mechanics and control theoretic framework. For a large class of locomotion systems, including underwater vehicles, spacecraft with rotors and wheeled or legged robots, it is possible to model the motion using the mathematical structure of a connection on a principal bundle, see \cite{10}, \cite{35} for examples. We use such a geometric approach to model the Purcell's swimmer's motion, which has also been followed in \cite{10}, \cite{35}. In addition to the kinematic modelling, the problems on controllability of the Purcell's swimmer and its variants has also been widely studied in the literature \cite{11}, \cite{12}, \cite{44}, \cite{38}.

The literature has also explained approaches to problem of motion planning of the Purcell's swimmer. The concept of body-velocity integral is introduced in \cite{42}, in which the velocities expressed in a body-fixed frame are integrated in order to find the net body motion for a given gait. The original paper by Purcell \cite{5} has used intuitive arguments to explain square gait by making use of structural symmetries in the swimmer model for generating motions. Based on the concept of principal connection resulting out of the non-holonomic constraints, a few other robotic locomotion problems have also been analyzed, like classical examples of snakeboard and roller-racer. Work in \cite{35}. \cite{54}, \cite{56} present optimal gait design strategy for such class of swimmers, whereas \cite{55} explains control using curvature of connection forms of the system.  In one of the recent works, \cite{62} identifies the symmetries in the Purcell's model using which certain sequence of base curves are identified which generate a net displacement in one of the group directions. Although this method works for the 3-link Purcell's swimmer, the shortcomings of this approach are that one has to explicitly identify such symmetries and there does not seem to be a standard approach to identify the gaits which lead to a particular group motion. 

The approach presented in this paper bases itself on the notions of controllability, existence of motion primitives and properties of flow of vector fields under lie bracket operations, and is not specific to a particular system model. The problem of finding the group motion is simply treated as finding an element in the Lie algebra of the underlying control vector fields. Moreover, this way of constructing motion allows us to arbitrarily select the trajectories in the group variable, given at least the weak controllability property of the underlying system \cite{10}. Such an approach is based on the theory of motion primitives, presented in \cite{66} for solving problems of point-to-point reconfiguration of satellite and underwater vehicle control systems. The idea is also used in \cite{67} for human-inspired bipedal robotic locomotion.

\subsection{Organization of the paper:}
In the following section we discuss the configuration space of the swimmer and briefly go over its kinematic model based on resistive force theory. In section 3, we explain the experimental setup of the swimmer's hardware prototype, the arrangement of vision system and the approach to obtain the viscous drag coefficients of the swimmer. Section 4 describes the theory of motion using primitives for the driftless systems. We prove the existence of the primitive controls followed by their synthesis. Finally, in the last section we present the experimental results and explain the analogy between the gaits based on symmetry arguments shown in \cite{62}.

\section{Kinematic model of the Purcell's swimmer}
The Purcell's swimmer is a 3 link mechanism moving in a fluid at low Reynolds number. Each link of the swimmer is modelled as a rigid slender body of length $2L$ and radius $b$. The outer 2 links are actuated through rotary joints with the base link. We represent the orientation of the outer links with respect to the base link through shape variables $\alpha_1, \alpha_2$, each of which evolves on a circle $\mathbb{S}^1$. Thus, the shape space of the mechanism is parametrized by the two joint angles $(\alpha_1, \alpha_2) \in \mathbb{S}^1\times\mathbb{S}^1$. The macro-position of the three-link system in Fig. \ref{original_purcell01} is defined by location of the midpoint of the base link and its orientation with respect to the inertial reference frame. This is represented by $g$, which belongs to the Special Euclidean group $SE(2)$, which is the Lie group in our example parametrized by $(x,\: y,\: \theta)$. For a Lie group $G$, its associated Lie algebra is denoted by $\mathfrak{g}$. We denote by $\xi \in \mathfrak{g}$ the body frame velocity of the base link such that $\xi=[\xi_x, \xi_y, \xi_{\theta}]^T$, with $\xi_x, \xi_y$ being the translational components of velocity of the base link and $\xi_{\theta}$ is its rotational component. Hence, the configuration space is 
\begin{equation}
Q\:=\:SE(2) \times\mathbb{S}^1\times\mathbb{S}^1
\end{equation}

\begin{figure}[!htb]
\centering
\includegraphics[scale=.4]{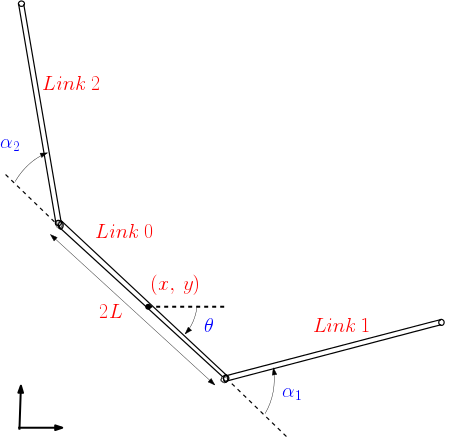}
\caption{Purcell's planar swimmer \cite{42}}
\label{original_purcell01}
\end{figure}

The condition of being at low Reynold's number and slenderness of the links in mechanism form the basis for modeling of the fluid forces acting on the systems in this regime. We use these to obtain the forces acting on an infintesimal element of length $\ell$ of a link moving in a fluid with viscosity $\mu$ corresponding  to its lateral and longitudinal direction in terms of its velocities as follows \cite{41}, \cite{57}
\begin{equation}\nonumber
F_{long} = \frac{2\pi\mu\ell \xi_{long}}{\ln(\ell/b)} = k_{long}\xi_{long},\:\:\: F_{lat} = \frac{4\pi\mu\ell \xi_{lat}}{\ln(\ell/b)} = k_{lat}\xi_{lat}
\end{equation}
where $\xi_{lat}$ and $\xi_{long}$ are the local lateral (perpendicular to the link length direction) and longitudinal (along the link length) velocity components. The total force can then be obtained by integrating over the entire link. By virtue of the resistive force theory \cite{42}, and the assumption of swimmer being massless we get the resulting equations of motion in a purely kinematic form in quation ($\ref{pure_kinematic}$). See \cite{24} for details on purely kinematic systems.
\begin{equation}\label{pure_kinematic}
\xi = -A(r)\dot{r}
\end{equation}
where $A(r)$ is the local connection form defined at each $r \in M$. We recall that for a shape space $M$, which is a manifold, its tangent space at a point $r \in M$ is denoted by $T_rM$, and the shape velocity $\dot{r} = (\dot{\alpha}_1,\dot{\alpha}_2) \in T_rM$. The local connection form is thus defined as $A(r) : T_rM \longrightarrow \mathfrak{g}$. For the Purcell's swimmer local connection form $A(r)$ is a $3 \times 3$ matrix which appears in the form of $\omega_1^{-1}\omega_2$. The matrices $\omega_1$ and $\omega_2$ are of size $3 \times 3$ and $3 \times 2$, respectively, and they depend on the lengths of the limbs, viscous drag coefficient $k$ and the shape of mechanism $(\alpha_1, \alpha_2)$. We refer to \cite{42} for explicit form of $\omega_1$ and $\omega_2$. The connection form and the other notions mentioned here have roots in geometric mechanics, see \cite{24}, \cite{35} for details.

\section{Experimental Setup}
The system model in equation \ref{pure_kinematic} forms the basis of our motion control analysis. This section looks into the experimental setup which is used to validate the theoretical model of the swimmer and the motion control algorithm. Although the 3-link swimmer used for the experimentation is a macro-scale model with links as flat plates and not of microbial scale, the high viscosity of fluid and very slow actuation speeds gives us the Reynolds number values low enough to have negligible inertial effects.

\begin{figure}[h!]
\centering
\includegraphics[width=0.4\textwidth]{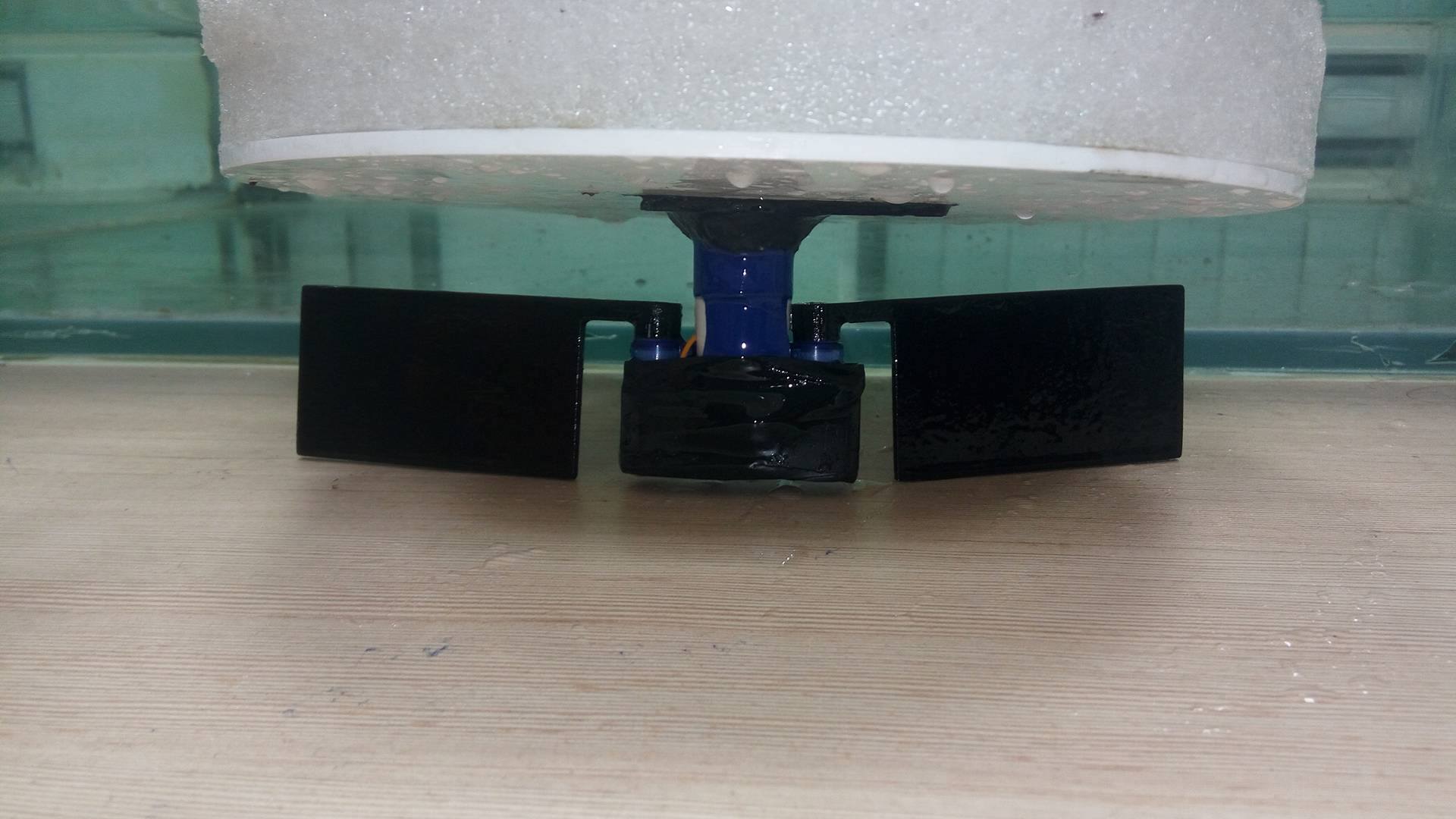}
\caption{Swimmer prototype}
\label{setup_complete}
\end{figure}

\subsection{Swimmer Construction}
The swimmer model consists of two parts: the lower part which is submerged in the fluid and the upper part which floats on the fluid. The lower part comprises of a nylon 3-D printed outer links and a casing which acts as the base link of the swimmer and contains the servo motors. The servos used are Corona CS939MG which are able to span an angle of 180 degrees. The servo dimensions are 22.5$\times$11.5$\times$24.6mm and they weigh $\approx$12.5 gram each. The casing (middle link) contains the two servos and the electrical wires are taken out to the floating part through a protrusion at the top. The dimensions of the middle link and side links are as shown in figure \ref{Dimension}.

\begin{figure}[h!]
\centering
\includegraphics[width=0.45\textwidth]{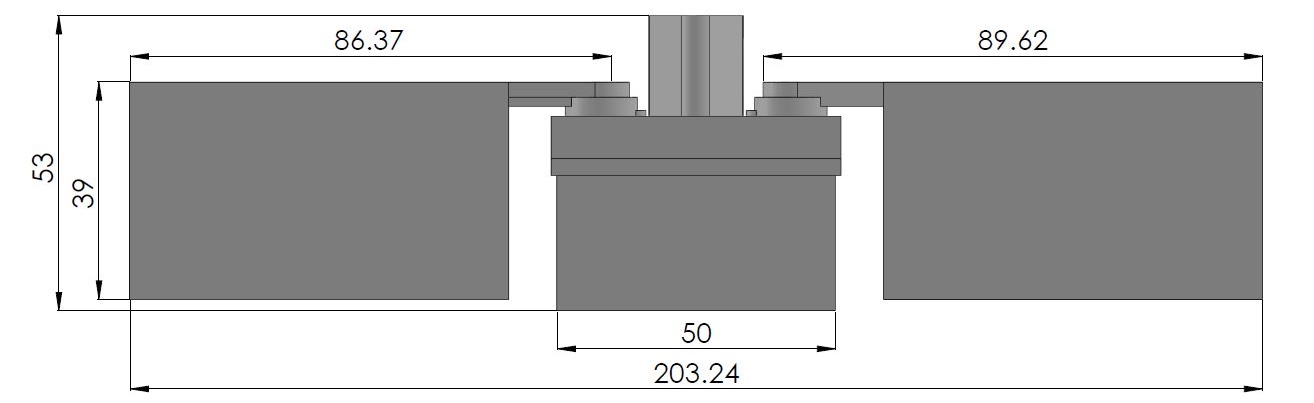}
\caption{Dimensions of the swimmer}
\label{Dimension}
\end{figure}

The side links are attached to the servos using a fixture and nuts. The top part contains the electrical components - batteries and a microcontroller. The microcontroller used is Arduino Nano because it is compact, complete and breadboard friendly. The batteries used are 3.7 V, 1000mAh Lithium Polymer batteries. The top part of the swimmer is covered with a black disk, shown in figure \ref{Styrofoam}. This disk has identifying patterns to detect the position of bot through the overhead camera system using image processing.

\subsection{Fluid Properties}
The fluid used for the experiment was glycerine, which was selected because of its high viscosity of 0.950 Pa-s, thus enabling us to achieve a Reynolds number $\approx$ 4. Approximately 0.1 cubic meter of the fluid was used in a container of size 3ft$\times$3ft$\times$1ft to allow experimenting various trajectories of the swimmer. 

\subsection{Weight neutrality and Stability}
\label{styrofoam}
To stabilize and make the swimmer buoyant, a circular Styrofoam sheet of diameter 200mm and thickness ~25mm is attached to an acrylic sheet which is fixed to the protrusion from middle link of the swimmer using an adhesive. The styrofoam sheet also prevents the model from tilting by maintaining a favorable metacentric height.
\begin{figure}[b]
\centering
\includegraphics[width=0.42\textwidth]{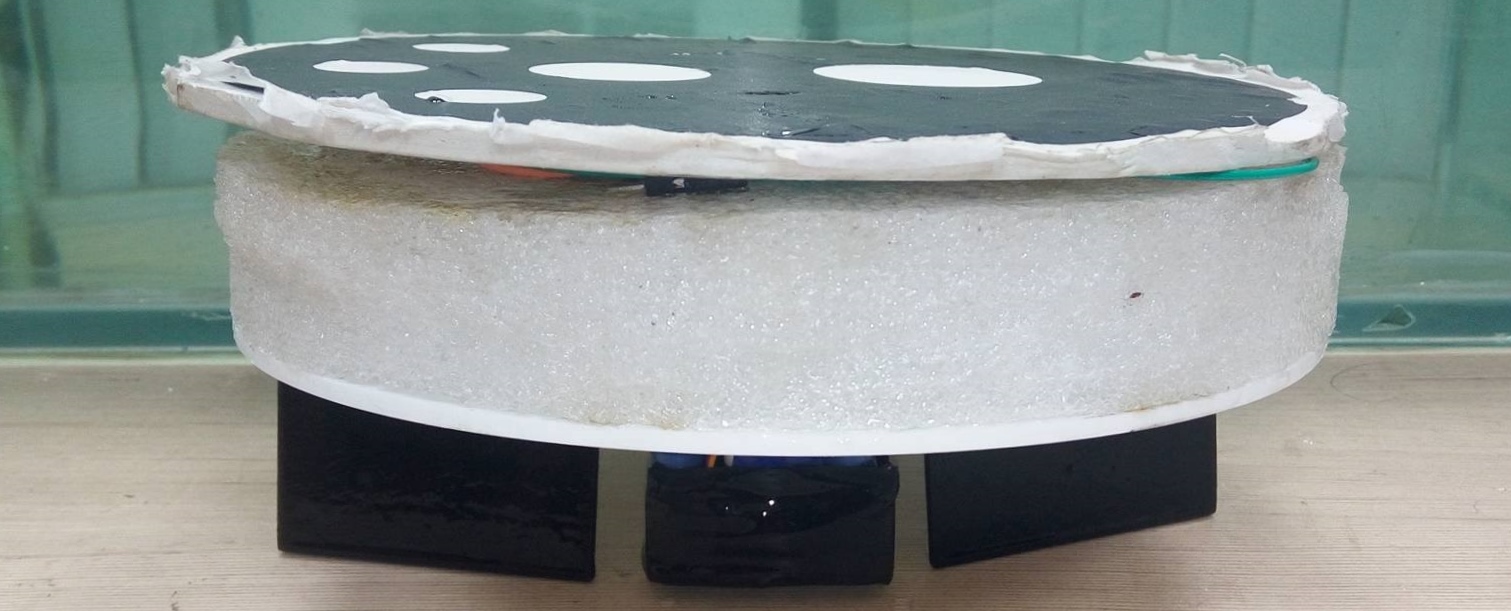}
\caption{Swimmer model with electrical system, styrofoam}
\label{Styrofoam}
\end{figure}

\subsection{Implementation of Control Law}
The control law is directly programmed on the microcontroller for an open loop analysis. The servos are controlled using a library “VarSpeedServo” \cite{68}, which helps it to move to a particular angle at a given speed. The entire control system developed using the primitives concept is hard coded in controller as a series of actuations of the side links. Since servo motors work in close loop control by default, any angle or speed feedback from servos are not necessary. The sensing for the base link's position and orientation is explained in the next section.

\subsection{Validation and vision system}
To validate the experimental and theoretical data, a vision based system is implemented. This is achieved using a ceiling mounted webcam which records the complete tank area. A continuous stream from webcam is fed to an image processing MATLAB algorithm, which identifies translational and angular position of the center of middle link using a black disk with white circles on it, relative to a predefined reference frame. This data is then compared with the theoretical values generated from MATLAB simulation which is presented in section VI.

\begin{figure}[H]
\centering
\includegraphics[width=0.3\textwidth]{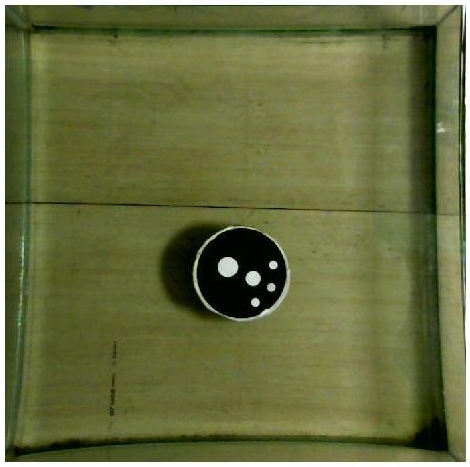}
\caption{Webcam view of the tank and swimmer}
\label{visual_system}
\end{figure}

\subsection{Swimmer Iterations}
The current prototype is the second iteration in our attempts to fabricate the Purcell Swimmer for theoretical and experimental studies. The initial model was designed to be weight neutral in glycerine on the basis of preliminary analysis. However, a few challenges were experienced while performing experiments on this version of the hardware, listed below -
\begin{enumerate}
  \item The swimmer would topple about its centroid and thus lead to rotational instability about the desired upright position. This problem was solved by using a styrofoam sheet in the new swimmer, explained in section \ref{styrofoam}.
  \item Earlier the electrical circuit was kept inside middle link thus requiring elaborate measures to waterproof the system every time the swimmer was submerged. This was solved by taking all the electric circuitry out of the swimmer, thereby making it convenient to handle. The middle link in the new prototype was completely waterproofed using spot welding on the joints.
\end{enumerate} 

\section{Simulation to get the drag coefficients}

In order to estimate the viscous drag coefficients of the links involved in the the system's model, we performed a simulation in ANSYS software where
\begin{enumerate}
  \item Flow over a horizontal plate model was used to estimate the tangential drag force, and
  \item Flow across a vertical plate model was used to estimate normal drag force.
\end{enumerate}
The input conditions, particularly the velocity of fluid was set using estimates on flow velocity in actual experiment. The fluid properties were set to match the viscosity and density of glycerine. The CAD model for links was used for simulation. The normal drag force was obtained to be 0.005922 N and tangential drag force to be 0.0001013N, which are used to obtain the viscous drag coefficients. The plots are shown in figure \ref{ansys}.

\begin{figure}[h]
\centerline{%
\includegraphics[width=0.22\textwidth]{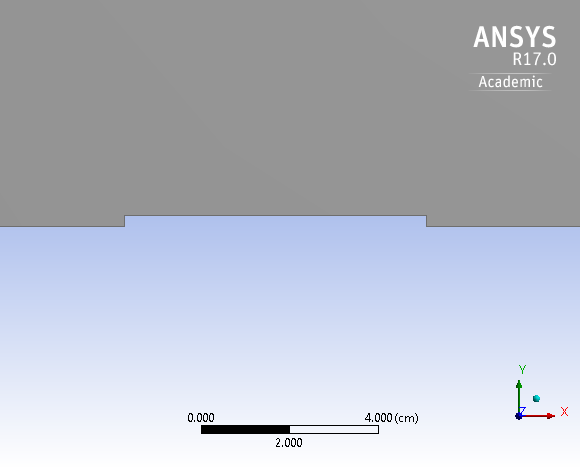}%
\includegraphics[width=0.22\textwidth]{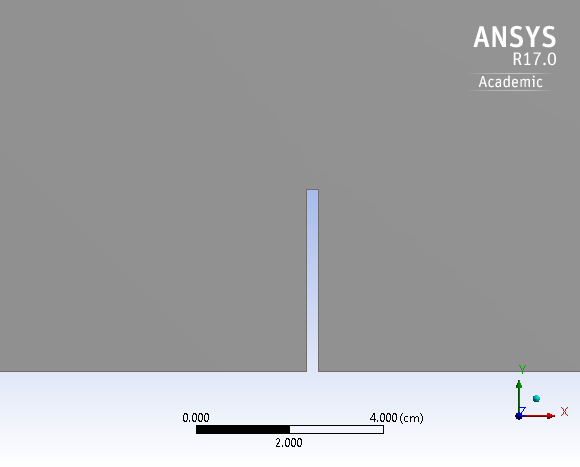}%
}%
\caption{On left, flow over horizontal  plate used for tangential force estimation and on right, flow over vertical plate used for normal force estimation}
\label{ansys}
\end{figure}

\begin{figure}[h]
\centerline{%
\includegraphics[width=0.22\textwidth]{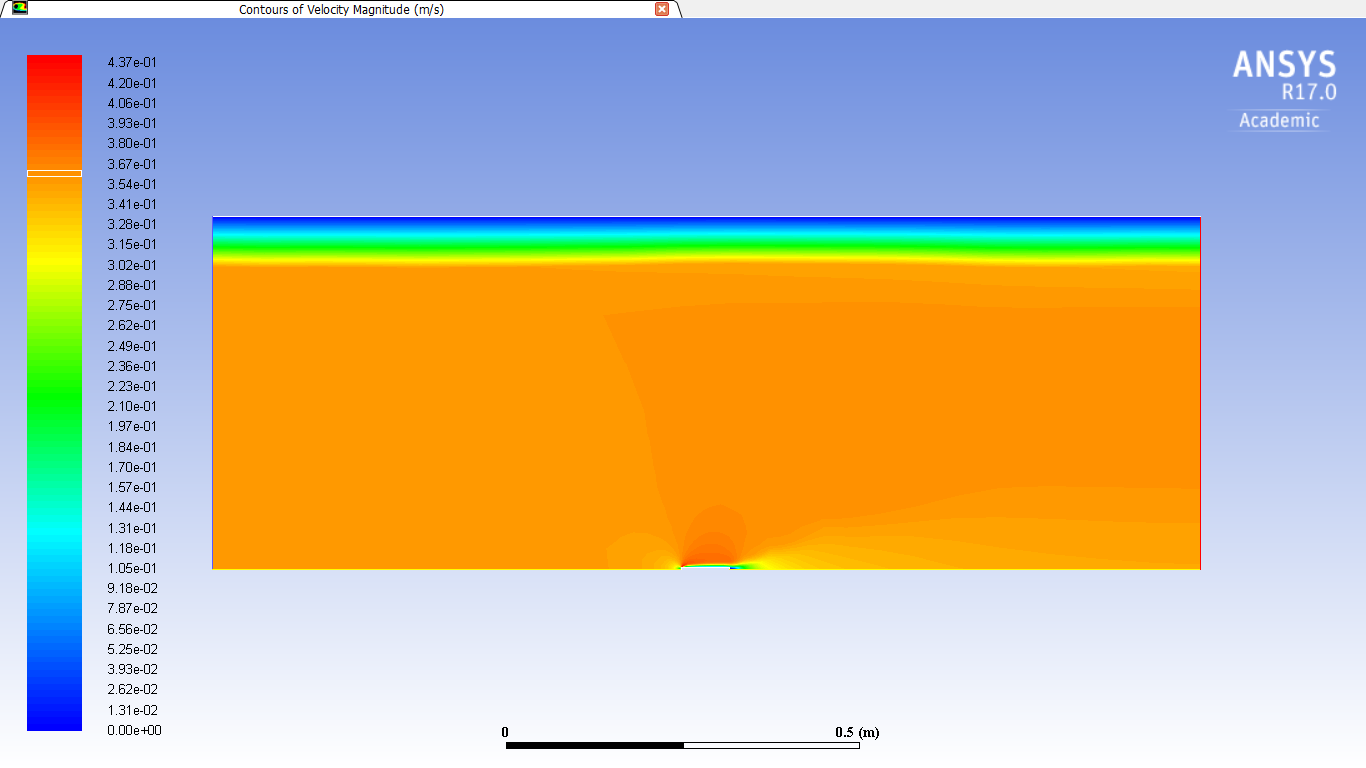}%
\includegraphics[width=0.22\textwidth]{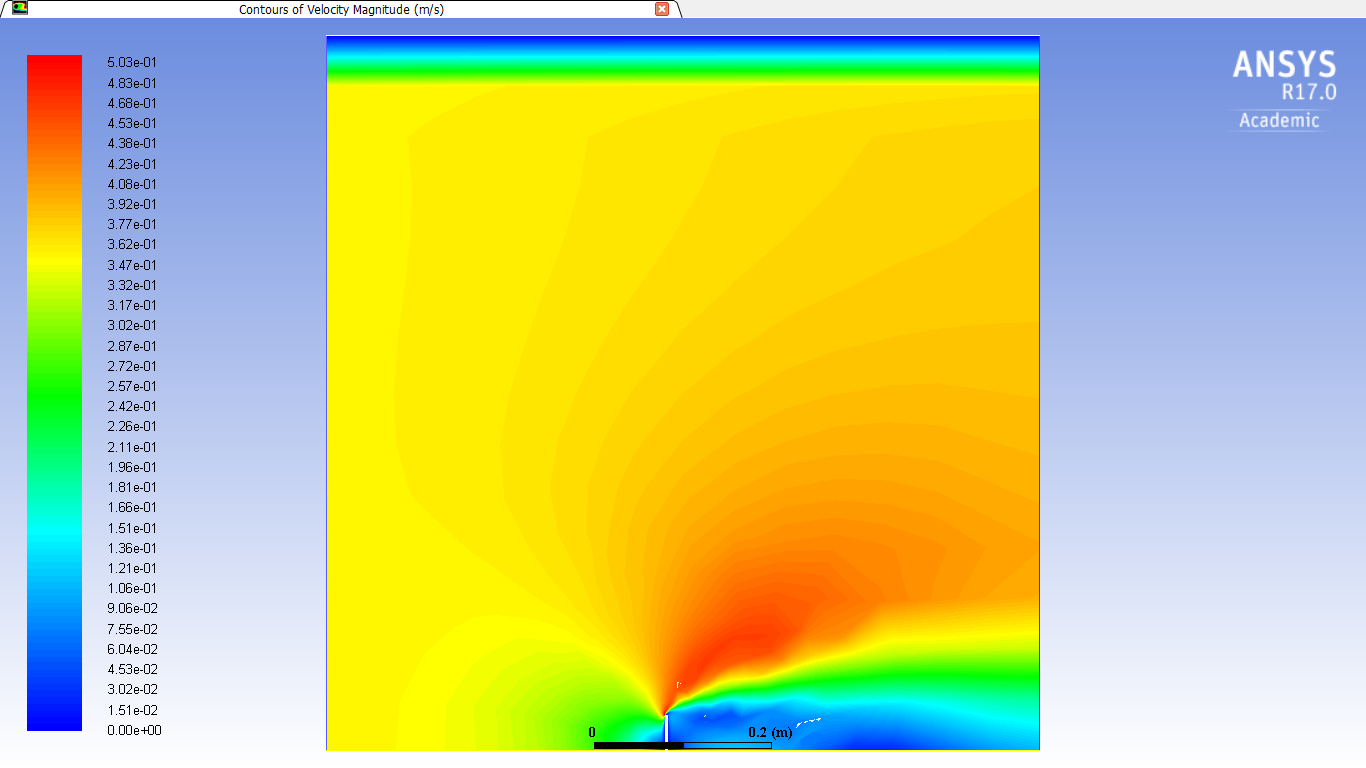}%
}%
\caption{On left, velocity profile over horizontal plate used for tangential force estimation and on right, velocity profile over vertical plate used for normal force estimation}
\label{ansys}
\end{figure}

\section{Theory on locomotion using primitives}
Since the base variables $\alpha_1$, $\alpha_2$ are fully actuated, the planar Purcell swimmer's equations in (\ref{pure_kinematic}) can be written in a driftless control-affine form as
\begin{align}
\begin{bmatrix}
\dot{\alpha}_1 \\
\dot{\alpha}_2 \\
\xi
\end{bmatrix} \nonumber = \begin{bmatrix}
1 \\
0 \\
A_1(\alpha_1,\alpha_2)
\end{bmatrix} \dot{\alpha}_1 + \begin{bmatrix}
0 \\
1 \\
A_2(\alpha_1,\alpha_2)
\end{bmatrix} \dot{\alpha}_2 \\ = g_1(\alpha_1,\alpha_2)\dot{\alpha}_1 + g_2(\alpha_1,\alpha_2) \dot{\alpha}_2 \label{pure_kinematic2} \end{align}
where, $A_1, A_2$ are the 2 columns of the local connection form $A$, and $g_1$, $g_2$ are the control vector fields. Using Chow's theorem it is shown that the Purcell's swimmer is strongly controllable at all the points in its configuration space \cite{11}. Thus, the Lie algebra generated by the control vector fields $g_1, g_2$ satisfies the following condition
\begin{align*}
& span\{g_1, g_2, [g_1, g_2], [g_1,[g_1,g_2]], [g_2,[g_1,g_2]] \} = span \{T_qQ\},
\\ & \text{where,} \:\: dim(T_qQ) = 5, \:\: \forall q \in Q
\end{align*}
A natural question that arises after coming up with controllability results is whether one can compute the controls which generate a desired trajectory in its configuration space. This section presents the work done on open loop gait design of the swimmer using the idea of motion primitives. We approach the trajectory design problem for the planar Purcell's swimmer with a simple motion planning task of point to point control, referred from \cite{66}.
\subsection{Existence of primitive controls}
We seek to control the motion of the swimmer by applying primitive controls. For $m$ dimensional control inputs, let $U = \{e_1, \cdots, e_m, -e_1, \cdots, -e_m\}$, where $e_i$'s are the unit inputs on $i$'th control. We denote by $\mathcal{U}_{prim}^m$ the collection of piecewise constant U-valued controls. Then motion planning problem using primitives is the one which takes control inputs from $\mathcal{U}_{prim}^m$. In this case, a controlled trajectory will be a composition of the integral curves of the vector fields $X_1, \cdots, X_m$. We denote by $\Phi_t^X (q)$ the integral curve or the flow of the vector field $X$ for time duration $t$ starting from $q \in Q$. We make use of the following theorem for motion planning using primitives \cite{40}.
\begin{theorem}
Let $(Q, \mathcal{V}=\{X_1, \cdots , X_m \}, \mathbb{R}^m)$ be a $C^{\infty}$ - driftless system with the vector fields $X_1, \cdots, X_m$ complete. Suppose that $Lie^{(\infty)}(V) = TQ$. If $Q$ is  connected, then, for each $q_0, q_1 \in Q$, there exist $k \in \mathbb{N}$, $t_1, \cdots, t_k \in \mathbb{R}$ and $a_1, \cdots, a_k \in \{1, \cdots, m\}$ such that 
\begin{equation}\label{primitives}
q_1 = \Phi_{t_k}^{g_{a_k}} \cdots \Phi_{t_1}^{g_{a_1}} (q_0)
\end{equation}
\end{theorem}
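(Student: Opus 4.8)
The plan is to deduce this from the \emph{Orbit Theorem} of Sussmann and Nagano (equivalently, the Chow--Rashevskii theorem carried out with bidirectional flows, which is the content of the result cited as \cite{40}). For $q_0\in Q$ define its \emph{orbit} $\mathcal{O}(q_0)$ to be the set of all points of the form $\Phi_{t_k}^{X_{a_k}}\circ\cdots\circ\Phi_{t_1}^{X_{a_1}}(q_0)$ as $k\in\mathbb{N}$, $t_1,\dots,t_k\in\mathbb{R}$, and $a_1,\dots,a_k\in\{1,\dots,m\}$ vary; this is exactly the set of points reachable from $q_0$ by $\mathcal{U}_{prim}^m$ controls with negative times allowed, so (\ref{primitives}) is precisely the statement $q_1\in\mathcal{O}(q_0)$. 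Completeness of the $X_i$ ensures each flow $\Phi_t^{X_i}$ is defined for all $t\in\mathbb{R}$, so these compositions are globally well defined.

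First I would check that the orbits partition $Q$. The relation $q\sim q'$ defined by $q'\in\mathcal{O}(q)$ is reflexive (empty composition, or $t_1=0$), symmetric (reverse the order of the factors and negate each time, using $(\Phi_t^{X})^{-1}=\Phi_{-t}^{X}$), and transitive (concatenate the two index-and-time strings). Hence $Q$ is the disjoint union of the orbits $\mathcal{O}(q_0)$, $q_0\in Q$.

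The crux is to show every orbit is open; this is where the bracket-generating hypothesis $Lie^{(\infty)}(\mathcal{V})=TQ$ is used. For a fixed index string $\mathbf{a}=(a_1,\dots,a_k)$ consider the endpoint map $F_{\mathbf{a}}(t_1,\dots,t_k)=\Phi_{t_k}^{X_{a_k}}\circ\cdots\circ\Phi_{t_1}^{X_{a_1}}(q_0)$, whose image lies in $\mathcal{O}(q_0)$, and let $d$ be the largest rank attained by any $dF_{\mathbf{a}}$ at any point of its domain, over all lengths $k$ and all strings. Suppose $d<n:=\dim Q$. Choosing $\mathbf{a}$ and $\bar t$ realizing rank $d$ and applying the constant-rank theorem, the image of $F_{\mathbf{a}}$ near $\bar t$ is a $d$-dimensional submanifold $N\subseteq\mathcal{O}(q_0)$. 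Maximality of $d$ forces that appending a factor $\Phi_s^{X_i}$ cannot raise the rank beyond $d$; but the enlarged endpoint map realizes, at points $q\in N$, the whole tangent space $T_qN$ together with $X_i(q)$, so $X_i$ must be tangent to $N$ for every $i$. Since tangency to a submanifold is preserved by Lie brackets, every iterated bracket of the $X_i$ is tangent to $N$, so $Lie^{(\infty)}(\mathcal{V})(q)\subseteq T_qN$ has dimension $\le d<n$, contradicting the hypothesis. Hence $d=n$: some endpoint map $F_{\mathbf{b}}$ is a submersion at a point, so $\mathcal{O}(q_0)$ contains a nonempty open subset $U$ of $Q$; and since any two points of $\mathcal{O}(q_0)$ are related by a composition of flows of the $X_i$, which is a diffeomorphism of $Q$ carrying $\mathcal{O}(q_0)$ to itself, translating $U$ shows $\mathcal{O}(q_0)$ contains an open neighborhood of each of its points, hence is open.

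With openness in hand the proof closes immediately: $Q$ is the disjoint union of its orbits, each open, so if $Q$ is connected there is a single orbit, i.e.\ $\mathcal{O}(q_0)=Q$, and in particular $q_1\in\mathcal{O}(q_0)$, which is (\ref{primitives}). The step I expect to demand the most care is the openness argument --- precisely, showing that the maximal rank $d$ equals $n$, by turning ``an extra flow factor cannot raise the rank'' into tangency of each $X_i$ to the image submanifold, and then propagating openness from a neighborhood of one reachable point to the whole orbit. Since this is exactly the statement cited as \cite{40}, in the paper it is enough to verify its hypotheses --- completeness of the control vector fields, $Lie^{(\infty)}(\mathcal{V})=TQ$ coming from the strong controllability result, and connectedness of $Q=SE(2)\times\mathbb{S}^1\times\mathbb{S}^1$ --- and invoke it.
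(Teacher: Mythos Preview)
Your proof is correct, but it goes well beyond what the paper does: the paper does not prove Theorem~1 at all. It quotes the statement from reference~\cite{40} and then, in the paragraph immediately following, simply verifies the hypotheses for the Purcell swimmer --- connectedness of $Q=SE(2)\times\mathbb{S}^1\times\mathbb{S}^1$ as a product of connected manifolds, and completeness of $g_1,g_2$ (deferred to the appendix). You anticipated this in your final paragraph.

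What you have supplied is the standard Chow--Rashevskii/Sussmann orbit argument that underlies the cited result: the equivalence-relation structure of orbits, the maximal-rank submersion argument forcing the bracket-generating condition to yield openness, and the connectedness step collapsing everything to a single orbit. This is the right proof, and your identification of the rank argument as the delicate point is accurate. So your proposal is a genuinely different contribution from the paper's treatment --- the paper invokes the theorem as a black box, while you unpack it --- but both arrive at the same place, and your closing remark shows you understand how the paper actually uses it.
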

\vspace{10pt}
Circle $\mathbb{S}^1$ and the Special Euclidean group $SE(2)$ are connected manifolds. Also, the product of connected manifolds is also connected, hence the Purcell's swimmer's configuration space $\mathbb{S}^1 \times \mathbb{S}^1 \times SE(2)$ is a connected manifold. A vector field on $Q$ is complete if every Cauchy sequence converges in $Q$. We show in appendix 1 that the control vector fields $g_1, g_2$ are complete. Thus, using theorem 1, we conclude that there exist motion primitives for our system satisfying \ref{primitives}.

\subsection{Control synthesis}
The theorem 1 guarantees the existence of the control vector fields which generate the flow between any 2 given points on a configuration manifold $Q$ of the swimmer. Now we synthesize the primitive controls achieving certain control objectives. The flow of vector fields $X,Y$ satisfy the following rules under the addition and Lie bracket of vector fields \cite{13} -
\begin{equation}\label{flow_under_addition}
\Phi_t^{X+Y} = \lim_{n\to\infty} (\Phi_{t/n}^X \circ \Phi_{t/n}^{Y})^n
\end{equation}
\begin{equation}\label{bracket_first_order}
\Phi_t^{[X,Y]} = \lim_{n\to\infty} (\Phi_{\sqrt{\frac{t}{n}}}^{-Y} \circ \Phi_{\sqrt{\frac{t}{n}}}^{-X} \circ \Phi_{\sqrt{\frac{t}{n}}}^{Y} \circ \Phi_{\sqrt{\frac{t}{n}}}^{X})^n
\end{equation}
Also, Lie bracket under scalar multiplications follows following equation
\begin{equation}\label{bracket_scalar_mult}
c[X,Y] = [cX, Y] = [X, cY], \:\:\:  c \in \mathbb{R}
\end{equation}
From the swimmer's model driftless form of equations its seen that the $2$ base velocities are fully actuated. By the controllability result in \cite{11}, the Lie algebra upto second order Lie bracketing operation fully spans the $5$ dimensional tangent space at $p = (0,0,0,0,0) \in Q$. By elementary linear algebra we can show that the vectors $[g_1, g_2]_p, [g_1,[g_1,g_2]]_p, [g_2,[g_1,g_2]_p$ span the group velocities at $p$. In particular, there exist scalars $\alpha, \beta, \gamma \in \mathbb{R}$, such that the vector field $\alpha [g_1, g_2] + \beta [g_1, [g_1, g_2]] + \gamma [g_2, [g_1, g_2]]$, evaluated at a given point shall give any of the desired basis for the group velocities - $ \frac{\partial}{\partial x}, \frac{\partial}{\partial y}, \frac{\partial}{\partial \theta} $. Thus, using equations \ref{flow_under_addition}, \ref{bracket_first_order}, \ref{bracket_scalar_mult} we get the following expression for the flow along the linear combinations of the vector fields from the Lie algebra of control vector fields in terms of the flow along the primitives $g_1, g_2$ as
\begin{align}\label{lin_combination_brackets}
& \Phi_t^{\alpha [g_1, g_2] + \beta [g_1, [g_1, g_2]] + \gamma [g_2, [g_1, g_2]]} \\ 
& = \lim_{n\to\infty} (\Phi_{t/n}^{\alpha[g_1, g_2]} \circ \Phi_{t/n}^{\beta[g_1,[g_1, g_2]]} \circ \Phi_{t/n}^{\gamma [g_2, [g_1, g_2]]})^n \nonumber \\ 
& = \lim_{n\to\infty} [(\Phi_{\sqrt{\frac{t}{n}}}^{-g_2} \circ \Phi_{\sqrt{\frac{t}{n}}}^{-\alpha g_1} \circ \Phi_{\sqrt{\frac{t}{n}}}^{g_2} \circ \Phi_{\sqrt{\frac{t}{n}}}^{\alpha g_1}) \nonumber
\\
& \qquad \qquad \circ [(\Phi_\frac{\sqrt{\sqrt{t}}}{n}^{g_2} \circ \Phi_\frac{\sqrt{\sqrt{t}}}{n}^{g_1} \circ \Phi_\frac{\sqrt{\sqrt{t}}}{n}^{-g_2} \circ \Phi_\frac{\sqrt{\sqrt{t}}}{n}^{- g_1})^n \circ \Phi_\frac{\sqrt{t}}{n}^{-\beta g_1} \nonumber
\\ & \qquad \qquad \circ (\Phi_\frac{\sqrt{\sqrt{t}}}{n}^{-g_2} \circ \Phi_\frac{\sqrt{\sqrt{t}}}{n}^{-g_1} \circ \Phi_\frac{\sqrt{\sqrt{t}}}{n}^{g_2} \circ \Phi_\frac{\sqrt{\sqrt{t}}}{n}^{g_1})^n \circ \Phi_{\frac{\sqrt{t}}{n}}^{\beta g_1}]^n \nonumber \\
& \qquad \qquad \circ [(\Phi_\frac{\sqrt{\sqrt{t}}}{n}^{g_2} \circ \Phi_\frac{\sqrt{\sqrt{t}}}{n}^{g_1} \circ \Phi_\frac{\sqrt{\sqrt{t}}}{n}^{-g_2} \circ \Phi_\frac{\sqrt{\sqrt{t}}}{n}^{- g_1})^n \circ \Phi_{\frac{\sqrt{t}}{n}}^{- \gamma g_2} \nonumber \\
& \qquad \qquad \circ (\Phi_\frac{\sqrt{\sqrt{t}}}{n}^{- g_2} \circ \Phi_\frac{\sqrt{\sqrt{t}}}{n}^{-g_1} \circ \Phi_\frac{\sqrt{\sqrt{t}}}{n}^{g_2} \circ \Phi_\frac{\sqrt{\sqrt{t}}}{n}^{ g_1})^n \circ \Phi_{\frac{\sqrt{t}}{n}}^{\gamma g_2}]^n] \nonumber
\end{align}
Thus we have the expression for synthesizing the sequence of primitive control and the corresponding actuation-durations to achieve a velocity in a given group direction.

\section{Experimental results}
Using the equation \ref{lin_combination_brackets} which gives us the expression for control sequence of $u_1 = \dot{\alpha}_1$ and $u_2 = \dot{\alpha}_2$ we simulate the trajectory in order to generate a body velocity along one of the basis of group directions $\frac{\partial}{\partial x}, \frac{\partial}{\partial y}, \frac{\partial}{\partial \theta} $. The table below gives the values of the $\alpha, \beta, \gamma$ which decide the coefficients for linear combination in equation \ref{lin_combination_brackets} for achieving group velocity along one of the 3 directions.

\vspace{10pt}
\begin{tabular}{|l|c|c|c|}
    \hline
    \textbf{Motion (along basis)} & \textbf{$\alpha$} & \textbf{$\beta$} & \textbf{$\gamma$} \\
    \hline\hline
    Pure X ($\frac{\partial}{\partial x}$) & 0.03705 & 0 & 0\\
    \hline
    Pure Y ($\frac{\partial}{\partial y}$) & 0 & -0.0306 & 0.0306\\
    \hline
    Pure rotation ($\frac{\partial}{\partial \theta}$) & 0 & 0.0306 & 0.0306\\
    \hline
\end{tabular}
\captionof{table}{Coefficients of motion primitives} \label{tab:table1} 
\vspace{10pt}

\subsection{Motions along the group basis}
The first set of experiments was done to achieve the motion  along $\frac{\partial}{\partial x}, \frac{\partial}{\partial y}, \frac{\partial}{\partial \theta} $. The control profiles were found using \ref{lin_combination_brackets} with $t=1 sec$. Figures \ref{X motion}, \ref{Y motion} and \ref{Pure rotational motion} gives the simulation results showing net motions along $\frac{\partial}{\partial x}, \frac{\partial}{\partial y}, \frac{\partial}{\partial \theta}$ respectively.
\begin{figure}[!htb]
\centering
\includegraphics[width=1.0\linewidth]{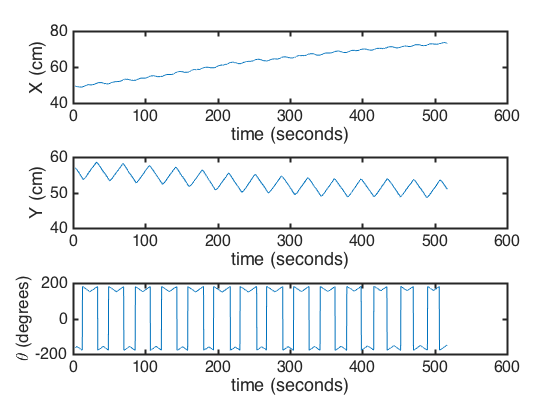}
\caption{Experimental results for net motion in X direction}
\label{X motion}
\end{figure}

\begin{figure}[!htb]
\centering
\includegraphics[width=1.0\linewidth]{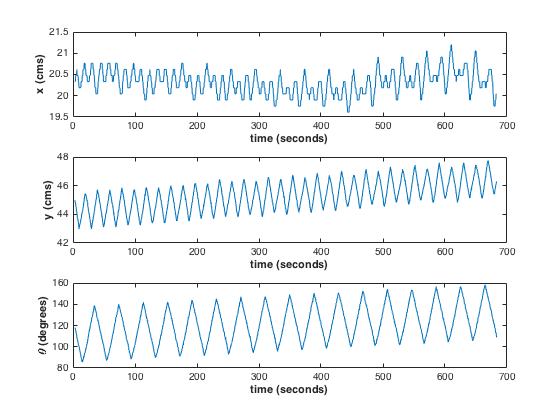}
\caption{Experimental results for net motion in Y direction}
\label{Y motion}
\end{figure}

\begin{figure}[!htb]
\centering
\includegraphics[width=1.0\linewidth]{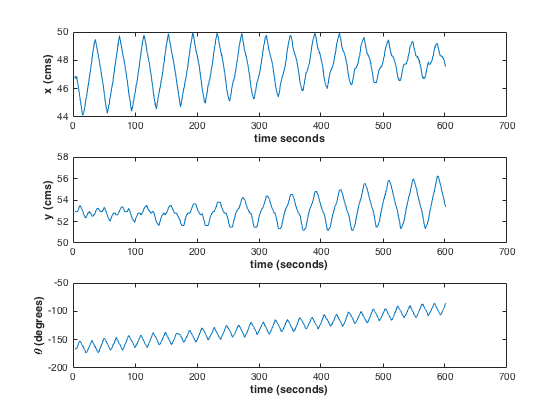}
\caption{Experimental results for net angular motion}
\label{Pure rotational motion}
\end{figure}

\subsection{Comparison of Lie brackets with existing symmetry based gaits}
In this subsection we explain the existing gaits given in \cite{62} which is based on identification of symmetries in the system model.
\subsubsection{Approximating X direction motion}
As given in table \ref{tab:table1}, for a pure motion in X direction we have $\alpha = 0.03705$, $\beta = 0$ and $\gamma = 0$. Also, even if we take any scaled versions of these values the motion will still take place in X-direction. Hence taking $\alpha=1,\quad \beta = 0 ,\quad \gamma = 0$ there are four different ways we can write gaits for $\phi^{[g_1 g_2]}_{t}$ for n =1 as shown in the figures \ref{fig:Xliebracket1}, \ref{fig:Xliebracket2}, \ref{fig:Xliebracket31}, \ref{fig:Xliebracket41}. Therefore, the motion would remain in X direction even if concatenate all the four of these loops as they all represent the same net motion. Thus, by concatenating all four of these loops we get the gait for motion in pure X direction as given in \cite{62} as given by figure \ref{fig:gutmanX}.
\begin{figure}[!htb]
\begin{subfigure}{0.2\textwidth}
  \includegraphics[width=1.0\linewidth]{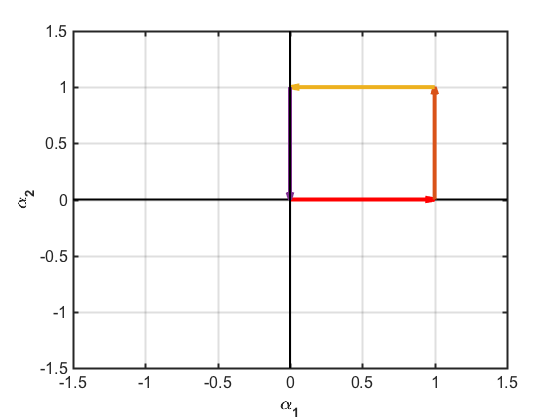}
  \caption{\scalebox{.7}{$(\phi^{-g_2}_{\sqrt{t/n}} \circ \phi^{-g_1}_{\sqrt{t/n}} \circ \phi^{g_2}_{\sqrt{t/n}} \circ \phi^{g_1}_{\sqrt{t/n}})^{n}$}}
 \label{fig:Xliebracket1}
\end{subfigure}
\hfill
\begin{subfigure}{0.2\textwidth}
  \includegraphics[width=1.0\linewidth]{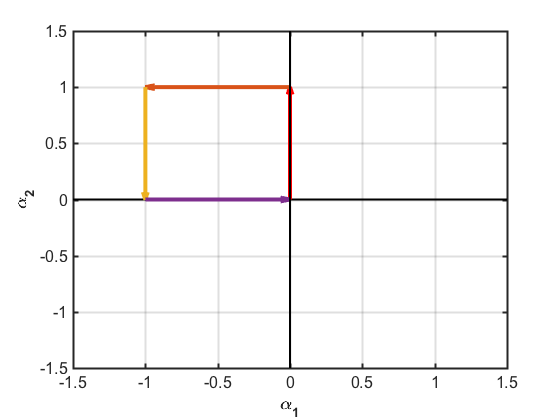}
  \caption{\scalebox{.7}{$(\phi^{g_1}_{\sqrt{t/n}} \circ \phi^{-g_2}_{\sqrt{t/n}} \circ \phi^{-g_1}_{\sqrt{t/n}} \circ \phi^{g_2}_{\sqrt{t/n}})^{n}$}}
 \label{fig:Xliebracket2} 
\end{subfigure}
\\[6pt]
\begin{subfigure}{0.2\textwidth}
  \includegraphics[width=1.0\linewidth]{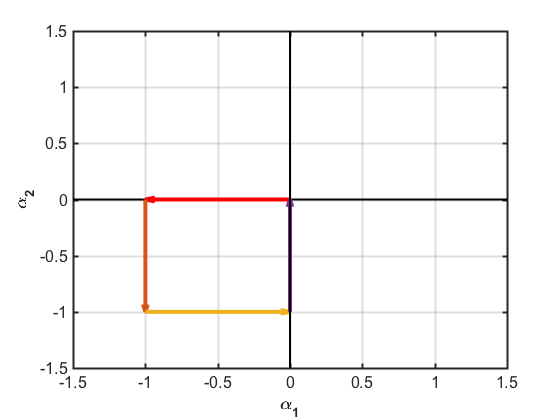}
\caption{\scalebox{.7}{$(\phi^{g_2}_{\sqrt{t/n}} \circ \phi^{g_1}_{\sqrt{t/n}} \circ \phi^{-g_2}_{\sqrt{t/n}} \circ \phi^{-g_1}_{\sqrt{t/n}})^{n}$}}
  \label{fig:Xliebracket31}
  \end{subfigure}
  \hfill
\begin{subfigure}{0.2\textwidth}
  \includegraphics[width=1.0\linewidth]{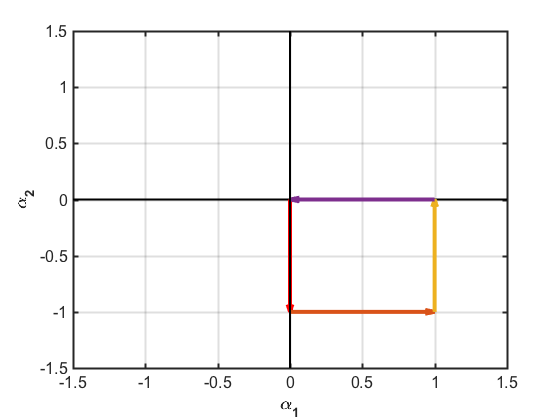}
\caption{\scalebox{.7}{$(\phi^{-g_1}_{\sqrt{t/n}} \circ \phi^{g_2}_{\sqrt{t/n}} \circ \phi^{g_1}_{\sqrt{t/n}} \circ \phi^{-g_2}_{\sqrt{t/n}})^{n}$}}
\label{fig:Xliebracket41}
\end{subfigure}
\begin{subfigure}[c]{0.4\textwidth}
\centering
\includegraphics[width=0.8\linewidth]{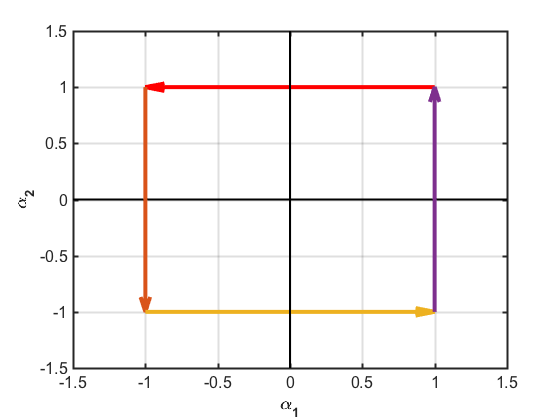}
\caption{Gait for pure motion in X direction as given in  (\cite{62})}
\label{fig:gutmanX}
\end{subfigure}
\caption{Motion in Pure X-direction}
\label{fig:Xlie}
\end{figure}
\subsubsection{Approximating Y and $\theta$ direction motion}
As given in table \ref{tab:table1}, the coefficients for motion in pure Y direction are $\alpha=0$, $\beta = -0.0306$ and $\gamma = 0.0306$. Motion primitive gait for motion in pure Y direction is shown in figure \ref{fig:Yliebracket1}for n = 1. Similarly for motion in pure $\theta$ direction are $\alpha=0$, $\beta = 0.0306$ and $\gamma = 0.0306$. Motion primitive gait for motion in pure $\theta$ direction is as shown in figure \ref{fig:thetaliebracket1}. As mentioned previously even if we take scaled versions of these coefficients the direction of motion remains the same. Thus, scaling the coefficients $\alpha$, $\beta$ and $\gamma$ such that motion corresponding to these coefficients as given by equation \ref{lin_combination_brackets} is negligible in comparison to other actuations in the system. So we can safely ignore them, and thus get gaits as shown in \ref{fig:Ymotion} and \ref{fig:thetamotion} as given in \cite{62}.
\begin{figure}[!htb]
\begin{subfigure}{0.20\textwidth}
  \includegraphics[width=1.1\linewidth]{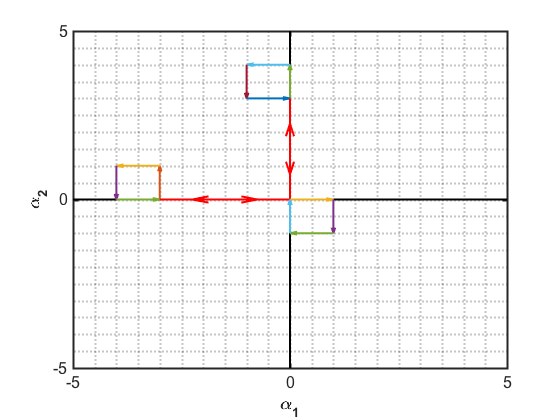}
  \caption{$\alpha =0$, $\beta=-3$ and $\gamma = 3$}
  \label{fig:Yliebracket1}
  \end{subfigure}
\hfill
\begin{subfigure}{0.20\textwidth}
\includegraphics[width=1.1\linewidth]{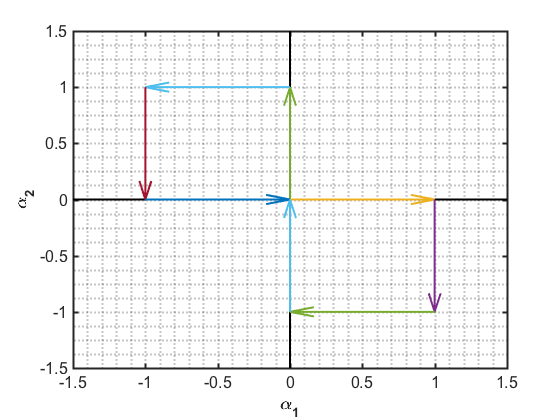}
\caption{Y motion gaits (\cite{62})}
\label{fig:Yliebracket2}
\end{subfigure}
\caption{Motion in pure Y direction(\cite{62})} 
\label{fig:Ymotion}
\end{figure}
\begin{figure}[!htb]
\begin{subfigure}{0.20\textwidth}
  \includegraphics[width=1.1\linewidth]{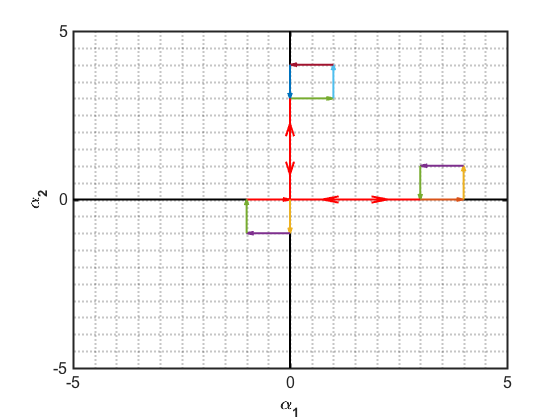}
  \caption{$\alpha =0$, $\beta=3$ and $\gamma = 3$}
  \label{fig:thetaliebracket1}
  \end{subfigure}
\hfill
\begin{subfigure}{0.20\textwidth}
\includegraphics[width=1.1\linewidth]{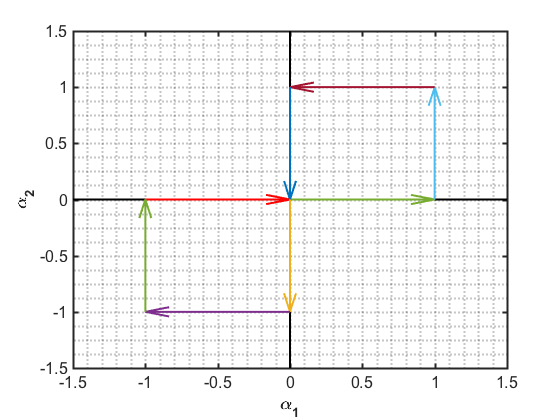}
\caption{$\theta$ motion gaits (\cite{35})}
\label{fig:thetaliebracket2}
\end{subfigure}
\caption{Motion in pure $\theta$ direction(\cite{62})} 
\label{fig:thetamotion}
\end{figure}

\subsection{Trajectory tracking results}
The primitive motions explained in the previous section have been used to track arbitrary trajectories of the base link in SE(2). We consider the following 2 cases.
\subsubsection{Tracking of an arbitrary straight line}
\begin{itemize}
\item The objective was to move along a straight line of length 12 cm in the direction 154 deg from the positive x axis of the reference frame
\item This was accomplished by first doing a $\theta$ maneuver to rotate the base link from being parallel to x axis to that at 26 deg and then translate along swimmer’s x-direction for 20 cm
\item Figure \ref{Straight line tracking} shows the experimental results of translational position of the center of the base link with respect to the reference frame. 
\item The denser part near the point $(24,62)$ in figure \ref{Straight line tracking} corresponds to angular motion to align the base link with the desired direction of motion
\item The time required for initial rotation was roughly 260 seconds and time for x direction was roughly 720 seconds
\end{itemize}

\begin{figure}[!ht]
\centering
\includegraphics[width=1.0\linewidth]{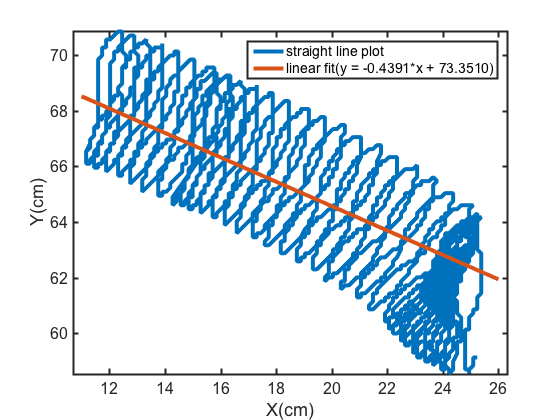}
\caption{Experimental results - tracking a straight line}
\label{Straight line tracking}
\end{figure}

\subsubsection{Tracking of circular trajectory}
\begin{itemize}
\item The objective was to track a circular trajectory of radius $20 cm$ approximated as a 10 sided polygon with length of sides as $12cm$
\item From each vertex first a theta maneuver is performed to rotate by 36 degree and then translate along the base link axis by pure x-maneuver by $12 cm$
\item The time required for each theta rotation was roughly 360 seconds and the time for x direction motion was roughly 720 seconds
\item These maneuvers were repeated 10 times in a loop to follow the complete trajectory. The tracking of the entire polygon took around 3 hours to complete
\item Figure \ref{Circle tracking} shows open loop tracking results with respect to a reference frame. Also shown is a best fitting circular trajectory depicting the error in the system. The swimmer starts at the point $(50,47)$ and follows each of the sides of the polygon.
\item For each maneuver an error is observed as can be seen in the plots of individual x, y and theta maneuvers. This is attributed to following sources -
\begin{itemize}
\item Since the experiment for tracking runs for approximately 3 hours, the servo motor tends to give a few jerks after a long usage time in fluid. This was a deviation from commanded control trajectory.
\item Although, the control inputs for a pure motion along one of $ \frac{\partial}{\partial x}, \frac{\partial}{\partial y}, \frac{\partial}{\partial \theta} $ also gave small but nonzero displacement in the other 2 directions. This is also observed from the results in figures \ref{X motion} to \ref{Pure rotational motion}.
\item Since this is an open loop control, any deviation occurring due to noises in the system get accumulated
\end{itemize}
\end{itemize}

\begin{figure}[!htb]
\centering
\includegraphics[width=1.0\linewidth]{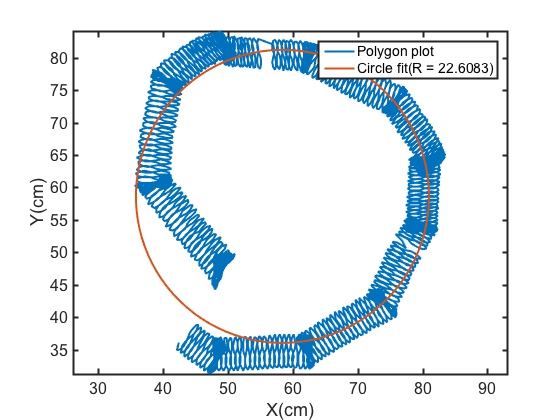}
\caption{Experimental results - tracking a circular trajectory}
\label{Circle tracking}

\end{figure}

\section{Conclusion and future work}
This paper presents a motion planning algorithm for a $3-$ link planar Purcell's swimmer. To our knowledge, this is the first work which proves the existence of motion primitives for a micro-swimming mechanism, followed by synthesis of the control inputs to track an arbitrary trajectory of the swimmer's base link in the Special Euclidean group $SE(2)$. Moreover, the approach followed is applicable to a large class of driftless control-affine systems. Hence, given the range of applications of microswimming robots, the work presented in this paper shall not only help to devise similar prototypes and motion planning algorithms, but also help in understanding mechanisms of locomotion of similar biological systems. Since most of the real life systems would work in a closed loop, as a part of future work, we plan to implement of a closed loop control system by using existing vision based feedback system.

\appendix
\subsection{Proof of completeness of control vector fields}
\begin{theorem}
The control vector fields $g_1$ and $g_2$ for the Purcell's swimmer in equation \ref{pure_kinematic2} are complete.
\end{theorem}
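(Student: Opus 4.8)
The plan is to show that no integral curve of $g_1$ or $g_2$ escapes to infinity in finite time; since $Q = SE(2)\times\mathbb{S}^1\times\mathbb{S}^1$ carries a complete product metric (the flat metric on $SE(2)\cong\mathbb{R}^2\times\mathbb{S}^1$ together with the standard metric on the torus, a product of complete metrics being complete), this is equivalent to completeness of the vector fields. The key structural observation is that the shape variables decouple from the group variables: along $g_1$ one has $\dot\alpha_1 = 1$, $\dot\alpha_2 = 0$, and along $g_2$ one has $\dot\alpha_1 = 0$, $\dot\alpha_2 = 1$. Hence the shape part of any integral curve is simply a line wound around the compact torus $\mathbb{S}^1\times\mathbb{S}^1$; it is defined for all $t\in\mathbb{R}$ and stays in a compact set.

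Next I would control the group part. Along $g_i$ the body velocity is $\xi(t) = A_i(\alpha_1(t),\alpha_2(t))$, the $i$-th column of the local connection form evaluated along the shape curve. The reconstruction equation $\dot g = g\,\xi(t)$ reads, in the coordinates $(x,y,\theta)$ on $SE(2)$,
\begin{align*}
\dot\theta &= \xi_\theta(t), \\
\dot x &= \cos\theta\,\xi_x(t) - \sin\theta\,\xi_y(t), \\
\dot y &= \sin\theta\,\xi_x(t) + \cos\theta\,\xi_y(t).
\end{align*}
If $\xi$ is bounded on the relevant time interval, then $\theta$ grows at most linearly and cannot blow up in finite time; consequently $|\dot x|,|\dot y|\le |\xi_x|+|\xi_y|$ is bounded, so $x$ and $y$ grow at most linearly as well. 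Therefore the integral curve extends to all of $\mathbb{R}$. (Equivalently: the norm of $g_i$ with respect to the chosen complete metric is bounded, and a bounded-norm smooth field on a complete Riemannian manifold is complete — a Cauchy-sequence argument on a curve of finite length, matching the notion of completeness used above.)

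It thus remains to prove that $A_1,A_2$ are bounded on $\mathbb{S}^1\times\mathbb{S}^1$. Recall $A = \omega_1^{-1}\omega_2$, where the entries of $\omega_1(\alpha_1,\alpha_2)$ and $\omega_2(\alpha_1,\alpha_2)$ are finite trigonometric and polynomial expressions in the link geometry and the drag coefficients, hence smooth — in particular bounded — on the torus. Since $\omega_2$ is bounded, it suffices to show $\omega_1^{-1}$ is bounded, for which it is enough that $\omega_1$ is invertible at every shape. This holds because $\omega_1$ is the swimmer's grand resistance matrix: it is assembled by integrating the symmetric positive-definite local drag tensors $\mathrm{diag}(k_{long},k_{lat})$ of the links, transported to the base frame, over the three links; hence $\omega_1(\alpha_1,\alpha_2)$ is symmetric positive-definite, so nonsingular, for every $(\alpha_1,\alpha_2)$. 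Then $\omega_1^{-1}$ depends smoothly on the shape and, the torus being compact, is bounded; so $A_1,A_2$ are bounded, which completes the argument.

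The main obstacle is exactly this last point: one must rule out singular shapes at which $\det\omega_1 = 0$, where $A$ — and hence $g_1,g_2$ — would fail even to be smooth and completeness could genuinely break down. I would settle it through the positive-definiteness of the resistance matrix as above; there is no way to avoid establishing boundedness of the connection on the shape torus, since every other route (the coordinate growth estimate or the complete-metric criterion) reduces to it.
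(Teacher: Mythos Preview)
Your argument is correct and in fact more careful than the paper's. The paper proceeds differently: it declares the ``domain'' of $g_1,g_2$ to be the shape torus $\mathbb{S}^1\times\mathbb{S}^1$ (since the coefficients depend only on $(\alpha_1,\alpha_2)$), notes that the fields are nowhere zero there, proves the torus is compact via Heine--Borel, and then invokes the standard lemma that a smooth vector field with compact support is complete. This is a shakier route: $g_1,g_2$ are vector fields on $Q=SE(2)\times\mathbb{S}^1\times\mathbb{S}^1$, they are nowhere zero on $Q$, and hence their support is all of $Q$, which is not compact---so the compact-support criterion does not literally apply. What the paper's argument morally captures is the observation you make precise, namely that the coefficients factor through the compact torus and are therefore bounded; but it does not turn that into a completeness statement on the noncompact $Q$.

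Your proof supplies exactly that missing step: you use compactness of the torus together with smoothness of $A=\omega_1^{-1}\omega_2$ to bound the body velocity, then run the reconstruction equation on $SE(2)$ to see that $(x,y,\theta)$ can grow at most linearly, ruling out finite-time escape. The additional care you take with the nonsingularity of $\omega_1$ (positive-definiteness of the grand resistance matrix) is genuinely required to ensure $A$ is globally defined and smooth, and the paper does not address it. So your approach costs a little more work but actually closes the argument; the paper's shortcut trades rigor for brevity.
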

\begin{proof}
\begin{itemize}
\item The domain of both $g_1$ and $g_2$ is $\mathbb{S}^1 \times \mathbb{S}^1$. We observe that both $g_1$ and $g_2$ never attend zero value in their domain. Hence the support of $g_1$ and $g_2$ is the entire domain $\mathbb{S}^1 \times \mathbb{S}^1$
\item We show that $\mathbb{S}^1$ is compact
\begin{itemize}
\item The set $\{1\} \subset \mathbb{R}$ is closed, and the map $f : \mathbb{R}^2 \rightarrow \mathbb{R}$ is continuous. Therefore the circle 
\begin{equation}
\{x,y\} \in \mathbb{R}^2 : x^2 + y^2 = 1\} = f^{-1}(\{1\})
\end{equation}
is closed in $\mathbb{R}^2$.
\item $\mathbb{S}^1$ is also bounded, for example, by circle $x^2 + y^2 = 2$
\item Hence, by Heine-Borel theorem we see that $\mathbb{S}^1$ is compact
\end{itemize}
\item We use a lemma that a cartesian product of $2$ compact manifolds is compact, hence $\mathbb{S}^1 \times \mathbb{S}^1$ is also compact \cite{69}
\item Thus, the support of control vector fields $g_1, g_2$ is compact
\item A vector field is complete if its support is compact \cite{70}. Hence, $g_1, g_2$ are complete vector fields.
\end{itemize}
\end{proof}

\section*{ACKNOWLEDGMENT}
We acknowledge the help of Aseem Borkar and Prof. Arpita Sinha from Systems and Control Engineering department, IIT Bombay for their assistance in setting up the vision based tracking system.

\end{document}